\documentclass[3p,twocolumn]{elsarticle}

\usepackage{amsmath, amssymb, amsfonts, amsthm}
\usepackage{enumerate}
\usepackage{thmtools}
\usepackage{thm-restate}

\sloppy 

\newtheorem{lemma}{Lemma}

\theoremstyle{definition}

\newcommand{\LZtrip}{LZ\textsubscript{3}}
\newcommand{\Znon}{\hat{z}}
\newcommand{\Zov}{z}

\journal{Information Processing Letters}

\begin{document}

\begin{frontmatter}
\title{Comparison of LZ77-type Parsings}
\author[DK]{Dmitry Kosolobov}\ead{dkosolobov@mail.ru}
\author[AS]{Arseny M. Shur}\ead{arseny.shur@urfu.ru}
\address[DK]{University of Helsinki, Helsinki, Finland}
\address[AS]{Ural Federal University, Ekaterinburg, Russia}

\begin{abstract}
We investigate the relations between different variants of the LZ77 parsing existing in the literature. All of them are defined as greedily constructed parsings encoding each phrase by reference to a string occurring earlier in the input. They differ by the phrase encodings: encoded by pairs (length + position of an earlier occurrence) or by triples (length + position of an earlier occurrence + the letter following the earlier occurring part); and they differ by allowing or not allowing overlaps between the phrase and its earlier occurrence. For a given string of length $n$ over an alphabet of size $\sigma$, denote the numbers of phrases in the parsings allowing (resp., not allowing) overlaps by $\Zov$ (resp., $\Znon$) for ``pairs'', and by $\Zov_3$ (resp., $\Znon_3$) for ``triples''. We prove the following bounds and provide series of examples showing that these bounds are tight:\\
\indent $\bullet$ $\Zov \le \Znon \le \Zov \cdot O(\log\frac{n}{\Zov\log_\sigma\Zov})$ and $\Zov_3 \le \Znon_3 \le \Zov_3 \cdot O(\log\frac{n}{\Zov_3\log_\sigma\Zov_3})$;\\
\indent $\bullet$ $\frac{1}2\Znon < \Znon_3 \le \Znon$ and $\frac{1}2\Zov < \Zov_3 \le \Zov$.
\end{abstract}
\begin{keyword}
LZ77 \sep lossless data compression \sep greedy parsing \sep non-overlapping phrases
\end{keyword}

\end{frontmatter}

\section{Introduction}

The Lempel--Ziv parsing~\cite{LZ77} (LZ77 for short) is one of the central techniques in the data compression and it plays an important role in stringology and algorithms in general. The literature on LZ77 is full of different variations of the parsing originally described by Lempel and Ziv~\cite{LZ77} (curiously, the most popular modern LZ77 modifications differ from the original one\footnote{The original parsing is the \LZtrip~parsing defined below.}). Some of these LZ77-based parsings lie at the heart of common compressors such as {\tt gzip}, {\tt 7-zip}, {\tt pkzip}, {\tt rar}, etc.~and some serve as a basis for compressed indexes on highly repetitive data (e.g., see \cite{GGKNP2,KreftNavarroTCS,nav2004}).

Most LZ77 variations have a noticeable optimality property: they have the least number of phrases among all reference-based parsings with the same fixed-length coding scheme for phrases (for details, see~\cite{Rytter03,LZ76} or Lemma~\ref{LZoptimal} below). The analysis in~\cite{CharikarEtAl} shows that many other popular reference-based methods (including LZ78~\cite{LZ78}) are significantly worse than LZ77 in the worst case. Probably, because of these ``near-optimal'' properties of LZ77, many authors often implicitly consider different LZ77 variations as somehow equivalent in terms of the number of produced phrases. Despite the fact that numerous works have been published in the last 40 years on this topic (e.g., see~\cite{Salomon} and references therein), to our knowledge, until very recently (see~\cite{GagiePrezzaNavarro,KempaPrezza}), there were no theoretical comparative studies of this side of LZ77 modifications.
We partially close this gap establishing tight bounds on the ratios between the numbers of phrases in several popular LZ77 variations. Note that the comparison of the parsings in terms of the bit size of their \emph{variable-length} encodings is a different and, as it seems, more challenging problem (see~\cite{FerraginaNittoVenturini,KosolobovLZ77enc}).

We investigate the relations between the most popular variants of the LZ77 parsing that one might find in the existing literature on the subject. All of them are defined as greedily constructed parsings that encode each phrase by reference to a string occurring earlier in the input, but they differ by the format of the phrase encodings and by the constraints imposed on earlier phrase occurrences. We primarily investigate four LZ77 variants that, at a generic step of the left-to-right greedy construction, define the phrase $f$ starting at the current position $i$ as follows:
\begin{enumerate}
\item $f$ is the longest string that starts at position $i$ and occurs at position $j < i$ (or $f$ is a letter if such string is empty);
\item as in 1, but $j \le i - |f|$;
\item $f$ is the shortest string that starts at position $i$ and does not have occurrences at positions $j < i$ (but $f$~can occur earlier if it is the last phrase in the parsing);
\item as in 3, but $j \le i - |f| + 1$.
\end{enumerate}

We call these parsings, respectively, \emph{LZ parsing}, \emph{non-overlapping LZ (novLZ) parsing}, \emph{\LZtrip~parsing}, and \emph{non-overlapping \LZtrip~(nov\LZtrip) parsing} (formal definitions are given below). For a given string of length $n$ over an alphabet of size $\sigma$, denote the numbers of phrases in thus defined parsings by, respectively, $\Zov, \Znon, \Zov_3, \Znon_3$. The non-one-letter phrases of LZ and novLZ parsings can be encoded by pairs of integers: the length of $f$ plus the offset ($i - j$) to an earlier occurrence of $f$. The phrases of \LZtrip~and nov\LZtrip~parsings can be encoded by triples (hence the subscript ``$3$''): the length of $f$ plus the offset ($i - j$) to an earlier occurrence of $f[1..|f|{-}1]$ plus the letter $f[|f|]$. We prove that the numbers of phrases in the considered LZ77 parsings are related as follows.\footnote{Throughout the paper, all logarithms have base $2$ if it is not explicitly stated otherwise.}

\begin{restatable}{thm}{thmmain}
For any given string of length $n$ over an alphabet of size $\sigma$, one has $\Zov \le \Znon \le \Zov \cdot O(\log\frac{n}{\Zov\log_\sigma\Zov})$ and $\Zov_3 \le \Znon_3 \le \Zov_3 \cdot O(\log\frac{n}{\Zov_3\log_\sigma\Zov_3})$.\label{MainTheorem}
\end{restatable}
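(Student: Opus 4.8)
\emph{Overview and lower bound.} The plan is to prove the two chains by a common scheme; I describe the ``pairs'' chain $\Zov\le\Znon\le\Zov\cdot O(\log\frac{n}{\Zov\log_\sigma\Zov})$ in detail, the ``triples'' chain being analogous. The lower bound is immediate from optimality: every non-overlapping parsing is, a fortiori, a reference-based parsing that allows overlaps (a non-overlapping source occurs earlier), and by Lemma~\ref{LZoptimal} the greedy overlap-allowing parsing attains the least number of phrases among all such parsings with the pair encoding. Hence $\Zov\le\Znon$, and the same argument for the triple encoding gives $\Zov_3\le\Znon_3$.

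\emph{Upper bound, construction.} For the upper bound I would not analyse the greedy non-overlapping parsing directly, but instead exhibit \emph{one} valid non-overlapping parsing and bound its size, again invoking optimality of novLZ through Lemma~\ref{LZoptimal}. Starting from the LZ phrases $f_1,\dots,f_{\Zov}$ with offsets $d_t$ and lengths $\ell_t$, I keep each already non-overlapping phrase ($d_t\ge\ell_t$) unchanged. When $d_t<\ell_t$ the phrase has period $d_t$, and I replace it by a block built by doubling: first copy the period as one length-$d_t$ phrase (source at offset $d_t$, hence non-overlapping), then repeatedly emit a phrase whose source is the whole periodic stretch already materialised to the left of the current position; that source and its target are adjacent (so non-overlapping) and equal by periodicity, and the covered length doubles each step. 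This covers a period-$d_t$ region of length $\ell_t$ with $O(1+\log(\ell_t/d_t))$ non-overlapping phrases, so that
\begin{equation*}
\Znon\ \le\ \Zov+\sum_{t:\,d_t<\ell_t} O\!\left(1+\log\frac{\ell_t}{d_t}\right).
\end{equation*}

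\emph{Upper bound, counting.} It remains to show $\sum_{t}\bigl(1+\log(\ell_t/d_t)\bigr)=O\bigl(\Zov\log\frac{n}{\Zov\log_\sigma\Zov}\bigr)$. The budget $\sum_t\ell_t=n$ together with concavity of the logarithm already yields the weaker $O(\Zov\log(n/\Zov))$; the extra $\log_\sigma\Zov$ must come from distinctness. The basic ingredient is the longest-match property: for distinct $t$ the ``phrase plus following letter'' strings $T[i_t\mathinner{.\,.}i_t{+}\ell_t]$ are pairwise distinct, so at most $\sum_{k\le m+1}\sigma^k=O(\sigma^{m+1})$ phrases have length $\le m$. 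Choosing $m\approx\log_\sigma\Zov$ forces a constant fraction of the phrases to have length $\Omega(\log_\sigma\Zov)$, whence $n=\Omega(\Zov\log_\sigma\Zov)$ and the target logarithm is nonnegative. A period-sensitive refinement of the same count is what drives the improvement: an overlapping phrase with small period $d_t$ is essentially determined by its $d_t$ periodic symbols and its terminating mismatch, so the number of overlapping phrases with period $\le d$ is bounded by a quantity of order $\sigma^{d+1}$ (times a factor polynomial in $n$). Feeding this into the optimization for $\sum_t\log(\ell_t/d_t+1)$ caps how many phrases can have a large ratio $\ell_t/d_t$ and converts the crude $\log(n/\Zov)$ into $\log\frac{n}{\Zov\log_\sigma\Zov}$.

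\emph{Main obstacle and the triple chain.} The doubling construction and the plain concavity estimate are routine; the delicate step is the last one, where the distinctness of phrases, and especially of highly periodic ones, must be combined with the length budget $\sum_t\ell_t=n$ and the phrase count $\Zov$ in exactly the right proportions to extract the $\log_\sigma\Zov$ factor rather than the easy $\log(n/\Zov)$. I expect this optimization, and pinning down the correct period-sensitive counting bound, to be where the real work lies. The triple-encoded chain $\Zov_3\le\Znon_3\le\Zov_3\cdot O(\log\frac{n}{\Zov_3\log_\sigma\Zov_3})$ is then obtained by repeating the construction for the \LZtrip\ parsing, where the shortest-nonoccurring-factor rule alters only low-order details of the distinctness count.
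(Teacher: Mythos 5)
Your lower bound and your doubling construction are exactly the paper's: the novLZ parsing is an LZ-type parsing, so $\Zov\le\Znon$ by Lemma~\ref{LZoptimal}, and replacing each overlapping phrase of period $d_t$ and length $\ell_t$ by a geometric sequence of non-overlapping pieces costs $O(1+\log\frac{\ell_t}{d_t})$ phrases, after which Lemma~\ref{LZoptimal} converts the constructed novLZ-type parsing into a bound on $\Znon$. The genuine gap is the step you yourself flag as ``where the real work lies'': you never derive $\sum_t\log\frac{\ell_t}{d_t}=O(\Zov\log\frac{n}{\Zov\log_\sigma\Zov})$, and the counting bound you propose to drive it --- order $\sigma^{d+1}$ overlapping phrases of period at most $d$, ``times a factor polynomial in $n$'' --- is too weak to close it on its own, because a period-$d$ phrase is determined by its period root, its following letter \emph{and its length}, and the length ranges over $[1..n]$; the resulting $n\sigma^{d+1}$ can exceed $\Zov$ by an arbitrary amount even for $d=1$. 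Extracting the $\log_\sigma\Zov$ factor from this would require a nontrivial two-constraint optimization (the length budget $\sum_t\ell_t\le n$ played against the per-length distinctness count), which is precisely the part left unproved.

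The paper sidesteps that optimization by changing where the doubling starts. Set $k=\alpha\log_\sigma\Zov$ for a constant $\alpha\in(0,1)$. If the period $p$ of an overlapping phrase satisfies $p\ge k$, your decomposition already uses only pieces of length at least $k$, so it has $O(\log\frac{\ell_t}{k})$ pieces and Lemma~\ref{TechLemma} (plain concavity) immediately yields the target $\Zov\cdot O(\log\frac{n}{\Zov k})=\Zov\cdot O(\log\frac{n}{\Zov\log_\sigma\Zov})$. If $p<k$, the doubling is started not at length $p$ but at length $cp$ with $c=\lfloor k/p\rfloor$, so that $\frac{k}{2}\le cp\le k$: every piece after the first still has an adjacent non-overlapping source by periodicity and has length at least $\frac{k}{2}$, so concavity applies as before. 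The only problematic objects are the first pieces, of length at most $k$, which may lack non-overlapping sources; since there are at most $k\sigma^k$ distinct strings of length at most $k$, only $o(\Zov)$ of these pieces must be split into single letters, contributing $O(k^2\sigma^k)=o(\Zov)$ phrases, and a similar count over the distinct ``phrase plus following letter'' strings disposes of the phrases with $\ell_t\le cp$. Adopting this normalization of the first piece turns your hard period-sensitive optimization into the one-line Lemma~\ref{TechLemma}.
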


The simpler bound $\Zov \le \Znon \le \Zov \cdot O(\log\frac{n}{\Zov})$ is easily implied by known results (e.g., by~\cite[Lem.~8]{Gawrychowski}) but our upper bound is better; in fact, it is tight, as the following theorem shows.

\begin{restatable}{thm}{thmmainexample}
For any integers $n > 1$, $\sigma \in [2..n]$, $\Zov \in [\sigma .. \frac{n}{\log_\sigma n}]$, there is a string of length $n$ over an alphabet of size $\sigma$ such that the sizes of its LZ and novLZ parsings are, respectively, $\Theta(\Zov)$ and $\Omega(\Zov\log\frac{n}{\Zov\log_\sigma\Zov})$. The same result holds for the \LZtrip/nov\LZtrip~parsings.
\label{MainExampleTheorem}
\end{restatable}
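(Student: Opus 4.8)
The plan is to build, for every admissible triple $(n,\sigma,\Zov)$, a single string $w$ whose LZ parse has $\Theta(\Zov)$ phrases while its novLZ parse has $\Omega(\Zov\log\frac{n}{\Zov\log_\sigma\Zov})$ phrases; the claim for \LZtrip/nov\LZtrip~will then follow for the very same $w$ from the constant-factor relations $\frac12\Zov<\Zov_3\le\Zov$ and $\frac12\Znon<\Znon_3\le\Znon$ stated in the abstract. Write $z:=\Zov$ and take $\ell=\Theta(\log_\sigma z)$ large enough that there exist $z$ pairwise non-rotation-equivalent primitive words $u_1,\dots,u_z$ of length $\ell$ over the $\sigma$-letter alphabet that together use all $\sigma$ letters. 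Set
$$ w \;=\; \underbrace{u_1u_2\cdots u_z}_{P}\;\cdot\; u_1^{k}\,u_2^{k}\cdots u_z^{k}, \qquad k=\Theta\!\Big(\tfrac{n}{z\ell}\Big), $$
padding the tail so that $|w|=n$ exactly. Then $|w|=z\ell(k+1)=\Theta(n)$, and the hypotheses $\sigma\le z\le n/\log_\sigma n$ ensure $\ell\ge1$ and $k\ge1$; when $k=O(1)$ the target bound is trivial since $\Znon\ge\Zov$ by Theorem~\ref{MainTheorem}.

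For $\Zov(w)=O(z)$ I exhibit a valid overlapping parse and invoke the optimality of LZ (Lemma~\ref{LZoptimal}): parse $P$ optimally, which costs $O(z\ell/\log_\sigma(z\ell))=O(z)$ phrases since $|P|=z\ell=\Theta(z\log_\sigma z)$, and parse each run $u_i^{k}$ with exactly two phrases — one copying the occurrence of $u_i$ inside $P$, and one length-$(k-1)\ell$ overlapping self-reference that points one period back. For the matching lower bound I use that the greedy parse of a prefix is unaffected by what follows, so $\Zov(w)\ge\Zov(P)$, and I choose the $u_i$ so that $P$ is incompressible, $\Zov(P)=\Omega(z)$. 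Such a $P$ exists by a counting argument: the number of length-$z\ell$ strings with at most $cz$ LZ phrases is $o(\sigma^{z\ell})$ for small $c$, so almost all choices are incompressible. Hence $\Zov(w)=\Theta(z)$.

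For $\Znon(w)=\Omega(z\log\frac{n}{z\log_\sigma z})$ the point is that each run $u_i^{k}$ forces $\Omega(\log k)$ non-overlapping phrases. Fix run $i$, occupying positions $[x_0,x_0+k\ell)$, and measure positions by their offset $s$ from $x_0$. Because the $u_i$ are primitive and pairwise non-rotation-equivalent while $P$ has pairwise distinct blocks, every factor of $u_i^{\infty}$ of length $\ge 3\ell$ occurs only inside run $i$. Therefore a non-overlapping phrase starting at offset $s$ (whose content is necessarily a factor of $u_i^\infty$ as long as it stays in the run) either has length $L<3\ell$, or references a position inside the run at a distance $d$ that is a multiple of $\ell$ with $L\le d\le s$, so $L\le s$. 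Consecutive phrase-starts inside the run thus satisfy $s_{t+1}\le 2\max(s_t,3\ell)$, and reaching offset $k\ell$ requires $\Omega(\log(k\ell/\ell))=\Omega(\log k)$ phrase-starts. These starts lie in disjoint run regions, so summing over the $z$ runs — with only $O(1)$ correction per inter-run boundary — yields $\Znon(w)\ge z\cdot\Omega(\log k)-O(z)=\Omega(z\log k)=\Omega(z\log\frac{n}{z\log_\sigma z})$.

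The main obstacle is the simultaneous combinatorial control of the blocks $u_1,\dots,u_z$: they must be primitive and rotation-distinct so that no long power $u_i^m$ ($m\ge2$) can appear outside its own run and short-circuit the doubling lower bound with a single long cross-reference, yet their concatenation $P$ must stay LZ-incompressible to pin $\Zov$ from below. Verifying that a choice meeting both demands exists, and checking the $O(1)$ boundary corrections between consecutive runs and at the padding, is the delicate part; the length/alphabet bookkeeping and the reduction of the triple-based parsings via the constant-factor bounds are then routine.
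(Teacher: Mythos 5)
Your construction is genuinely different from the paper's and its architecture is sound. The paper takes $u_i=ab^{d-1}v_i$ where $v_1,v_2,\dots$ is a $\sigma$-ary Gray code, and concatenates $k=\lfloor\Zov/8\rfloor$ runs $u_i^{\lfloor n/(2kd)\rfloor}$ with \emph{no} dictionary prefix: since consecutive $u_i$ differ in one position, each run is covered by $4$ LZ phrases (common prefix with $u_{i-1}$, one letter, common suffix, overlapping self-reference), and the same one-position-difference property makes the key claim --- that the leftmost occurrence of $u_i$ lies in run $i$ --- a two-line conjugacy argument. You instead prepend the dictionary $P=u_1\cdots u_z$, pay $O(z\ell/\log_\sigma(z\ell))=O(z)$ for it via the generic density bound, get $2$ phrases per run, and replace the Gray-code conjugacy argument by primitivity/synchronization of the $u_i$; your observation that the \LZtrip/nov\LZtrip{} case then follows for the \emph{same} string from Theorem~\ref{PairsVsTriplesTheorem} is actually cleaner than the paper's ``analogous analysis''. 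Two soft spots. First, your lower bound $\Zov(w)=\Omega(z)$ rests on an incompressible $P$, but your counting argument is over \emph{all} strings of length $z\ell$, whereas $P$ ranges only over the much smaller family of concatenations of pairwise non-conjugate primitive blocks; you must compare the number of compressible strings ($\le(2(z\ell)^2)^{cz}$) against the number of admissible tuples ($\ge(\sigma^\ell/(4\ell))^z$), which works only after fixing $\ell\ge C\log_\sigma z$ with $C$ large relative to $c$ --- or you can avoid incompressibility altogether by noting that the leftmost occurrence of the square $u_i^2$ lies in run $i$, so every run forces a phrase start even in the overlapping parse. Second, the claim that every length-$\ge 3\ell$ factor of $u_i^\infty$ occurs ``only inside run $i$'' is not quite right: such an occurrence may begin up to $\ell-1$ positions before the run (when $u_{i-1}$, or the tail of $P$, shares a proper suffix with $u_i$), so the reference distance satisfies only $L<s+\ell$ rather than $L\le s$; this weakens the recurrence to $s_{t+1}\le 2\max(s_t,3\ell)+\ell$, which still yields $\Omega(\log k)$ phrases per run, but the statement as written should be corrected.
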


Note that while the necessity of the condition $\Zov \ge \sigma$ in this theorem is obvious, the condition $\Zov \le \frac{n}{\log_\sigma n}$ is justified by the well-known fact that the size of the LZ/\LZtrip\ parsing of any string of length $n$ over an alphabet of size $\sigma$ is at most $O(\frac{n}{\log_\sigma n})$ (see~\cite[Th. 2]{LZ76}).

Theorems~\ref{MainTheorem} and~\ref{MainExampleTheorem} are the main results of this paper. To complete the picture, we also investigate the relations between the numbers $\Zov$, $\Zov_3$ and, respectively, $\Znon$, $\Znon_3$, proving simple bounds and their tightness in the following theorem.

\begin{restatable}{thm}{thmpairsvstriples}
For any given string, one has $\frac{1}2\Zov < \Zov_3 \le \Zov$ and $\frac{1}2\Znon < \Znon_3 \le \Znon$. These bounds are tight since, for each $k\ge 1$ and each of the four restrictions $\Zov_3=\Zov=k$; $\Zov_3=k$ and $\Zov=2k-1$; $\Znon_3=\Znon=k$; $\Znon_3=k$ and $\Znon=2k-1$ there is a binary string satisfying this restriction.
\label{PairsVsTriplesTheorem}
\end{restatable}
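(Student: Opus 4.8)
The plan is to prove the two displayed inequalities by comparing the cursors of the two greedy parsings, and then to settle tightness with explicit binary families.

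For the upper bounds $\Zov_3 \le \Zov$ and $\Znon_3 \le \Znon$ I would exploit the structural fact that each \LZtrip\ (resp.\ nov\LZtrip) phrase is exactly a longest earlier-occurring (resp.\ non-overlapping) factor followed by one extra letter, i.e.\ ``an LZ phrase plus one letter'' at the same starting position. Writing $p_0<p_1<\dots$ for the phrase boundaries of the LZ parsing and $q_0<q_1<\dots$ for those of the \LZtrip\ parsing, I would show by induction that $q_j\ge p_j$ for all $j$. The key observation is a shift lemma: if a factor starting at the current position occurs earlier, then every suffix of it occurs earlier as well (the occurrence being obtained by the same shift). Hence, if the \LZtrip\ cursor $q_j$ lies inside the LZ phrase $s[p_j..p_{j+1}{-}1]$, the remaining part of that phrase still occurs earlier, so the \LZtrip\ phrase at $q_j$ copies at least up to $p_{j+1}{-}1$ and then appends one more letter, overshooting $p_{j+1}$; the degenerate case of a fresh letter is handled separately and keeps $q_{j+1}\ge p_{j+1}$. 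Since both cursors must reach $n{+}1$, comparing at index $\Zov$ gives $q_{\Zov}\ge p_{\Zov}=n{+}1$, so \LZtrip\ has already terminated, i.e.\ $\Zov_3\le\Zov$. For the non-overlapping versions the same argument works once I check that the shifted occurrence still satisfies the offset constraint, which follows from $q_j\ge p_j$.

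For the lower bounds $\Zov<2\Zov_3$ and $\Znon<2\Znon_3$ I would charge LZ phrases to \LZtrip\ phrases. Let $h_1,\dots,h_m$ (with $m=\Zov_3$) be the \LZtrip\ phrases, each of the form $h_i=g_ic_i$ with $g_i$ the copied prefix and $c_i$ a single letter. By the same shift lemma, any LZ phrase that starts inside the prefix region of $h_i$ copies at least to the end of $g_i$, so \emph{at most one} LZ phrase can start in that region; together with at most one LZ phrase starting at the last letter $c_i$, this gives at most two LZ phrases starting inside each $h_i$. Since the first block $h_1$ is a single letter (nothing occurs earlier at position $1$), it receives exactly one LZ phrase, yielding $\Zov\le 1+2(m-1)=2\Zov_3-1$. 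The non-overlapping case is identical.

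The remaining, and main, work is tightness: for every $k\ge1$ I must exhibit binary strings realizing each of the four regimes. Small seeds already display the phenomena—$aab$ gives $(\Zov,\Zov_3)=(3,2)$ and $aabba$ gives $(5,3)$, matching $\Zov=2\Zov_3-1$, while $aba$, $abab$, $abaab$ give $\Zov=\Zov_3$ equal to $3$, $3$, and $4$. I would turn these into infinite families by a recursive/chaining construction in which each appended block adds one \LZtrip\ phrase and, for the factor-$2$ regime, is split by LZ into exactly two phrases. The delicate point—and the step I expect to be the main obstacle—is controlling the interaction between the two parsings across block boundaries: over a binary alphabet LZ cannot be kept at unit-length copies indefinitely, so the factor-$2$ family must instead use longer copied prefixes and arrange that the LZ phrase emitted at each trailing letter $c_i$ overshoots only into the copied prefix of the next block, never reaching its trailing letter, which preserves the ``two LZ phrases per block'' count without letting \LZtrip\ merge blocks. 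Dually, the equality family must be designed so that the positional lead $q_j>p_j$ guaranteed above never converts into a saved phrase. Each family is then verified by directly simulating both greedy parsings, and the novLZ/nov\LZtrip\ examples are obtained from the same constructions after checking that the non-overlap constraints leave the computed parses unchanged.
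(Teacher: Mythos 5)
Your proofs of the four inequalities are correct, and they are essentially the paper's argument with the optimality lemma (Lemma~\ref{LZoptimal}) unpacked by hand. The paper obtains $\Zov_3\le\Zov$ by observing that the LZ parsing is itself an \LZtrip-type parsing, and obtains $\Zov<2\Zov_3$ by splitting each \LZtrip\ phrase $f_i$ into $f_i[1..|f_i|{-}1]$ and its last letter to get an LZ-type parsing of size $\le 2\Zov_3-1$; both bounds then follow from the greedy parsing's optimality among parsings of its type. Your cursor induction $q_j\ge p_j$ and your charging argument (at most one LZ phrase starting inside the copied prefix $g_i$, at most one at the trailing letter $c_i$, exactly one in $h_1$) are direct re-derivations of that optimality in the two special cases needed, including the correct check that the shifted occurrence of a suffix still respects the non-overlap constraint. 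This is sound, just less modular than citing the lemma once.

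The genuine gap is the tightness half of the theorem, which you explicitly defer. The statement requires, for \emph{every} $k\ge 1$, a binary string for each of the four regimes $\Zov_3=\Zov=k$; $\Zov_3=k$, $\Zov=2k-1$; $\Znon_3=\Znon=k$; $\Znon_3=k$, $\Znon=2k-1$. Your seeds ($aab$, $aabba$, $aba$, \dots) only cover a few small values, and the ``recursive/chaining construction'' that would extend them is precisely the part you identify as the main obstacle and then do not carry out; no candidate family is written down, so nothing can be verified. This is where the paper's actual content lies: it exhibits explicit families with geometrically growing blocks, namely $aabaab^3aab^7\cdots aab^{2^{k-2}-1}aab^{2^{k-2}}$ for $\Zov=2k-1$, $\Zov_3=k$; $abab^4abab^{10}\cdots abab^{3\cdot 2^{k-1}-2}$ (and its truncation) for $\Zov=\Zov_3=k$; $a^2ba^5b^3a^{11}b^7\cdots$ for the $\Znon=2\Znon_3-1$ regime; and $(ab)^{2^{k-2}}$ for $\Znon_3=\Znon=k$ --- together with the resulting parsings for each. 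The exponential block growth is exactly the device that resolves the ``interaction across block boundaries'' you worry about: each new block is too long to be copied from earlier material in one phrase, yet its structure forces the two parsings to track each other in the intended ratio. Until you produce such families (or equivalent ones) and verify their parsings, the second sentence of the theorem is unproved.
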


It is known that a random string of length $n$ has $\Theta(n/\log_\sigma n)$ phrases in its Lempel--Ziv parsings (see~\cite[Th. 3]{LZ76}). A ``reasonably compressible'' string has, say, $\Omega(n/\log^{O(1)} n)$ phrases. For these strings our theorems imply that the sizes of all four considered LZ77 parsings are within $O(\log\log n)$ factor from each other; thus, we partially support the intuition that all these LZ77 variations are similar.

The paper is organized as follows. In Section~\ref{SectPrelim} we formalize the definitions of the LZ77 parsings under consideration and introduce some useful tools. In Section~\ref{SectOverlapVsNonoverlap}, the proofs of the main results (Theorems~\ref{MainTheorem} and~\ref{MainExampleTheorem}) are given. Theorem~\ref{PairsVsTriplesTheorem} is proved in Section~\ref{SectLZvsLZ3}. We conclude with some remarks and open problems in Section~\ref{SectConclusion}.

\section{Preliminaries}\label{SectPrelim}

A \emph{string $s$ of length $n$} over an alphabet $\Sigma$ is a map $\{1,2,\ldots,n\} \mapsto \Sigma$, where $n$ is referred to as the \emph{length of $s$}, denoted by $|s|$. We write $s[i]$ for the $i$th letter of $s$ and $s[i..j]$ for $s[i]s[i{+}1]\cdots s[j]$.
A string $u$ is a \emph{substring} of $s$ if $u = s[i..j]$ for some $i$ and $j$; the pair $(i,j)$ is not necessarily unique and we say that $i$ specifies an \emph{occurrence} of $u$ in $s$. A substring $s[1..j]$ (resp., $s[i..n]$) is a \emph{prefix} (resp. \emph{suffix}) of $s$. For any $i,j$, the set $\{k\in \mathbb{Z} \colon i \le k \le j\}$ (possibly empty) is denoted by $[i..j]$. A \emph{decomposition} of a string is its representation as the concatenation of nonempty substrings; writing a decomposition, we separate these substrings by dots. Two strings $u$ and $v$ are called \emph{conjugate} if $u = xy$ and $v = yx$ for some $x$ and $y$. An integer $p \in [1..|s|]$ is called a \emph{period} of $s$ if $s[i] = s[i{+}p]$ for any $i \in [1..|s|{-}p]$. The following lemma is obvious.

\begin{lemma}
Suppose that, in a string $s$, we have $w = s[i..j] = s[i'..j']$ and $i < i' \le j$; then $i' - i$ is a period of $w$.\label{OverlapPeriod}
\end{lemma}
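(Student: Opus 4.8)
The plan is to unwind the definition of a period and match up the two given occurrences of $w$ letter by letter. Set $p = i' - i$. The hypotheses $i < i'$ and $i' \le j$ give $1 \le p \le j - i = |w| - 1$, so $p$ is a legitimate candidate period (it lies in $[1..|w|]$, as required by the definition). Since both occurrences spell out the same word $w$, I would record the $k$-th letter of $w$ in two ways: reading from the first occurrence, $w[k] = s[i + k - 1]$ for every $k \in [1..|w|]$, and reading from the second occurrence, $w[k] = s[i' + k - 1]$ for every $k \in [1..|w|]$.

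Next I would verify the period condition $w[k] = w[k+p]$ for each $k \in [1..|w| - p]$. Using the first occurrence, $w[k+p] = s[i + k + p - 1]$, which is well defined precisely because $k + p \le |w|$ on this range. Substituting $i + p = i'$ rewrites this as $s[i' + k - 1]$, and the second occurrence identifies $s[i' + k - 1]$ with $w[k]$. Hence $w[k+p] = w[k]$, which is exactly the assertion that $p = i' - i$ is a period of $w$.

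There is no real obstacle here beyond keeping the index ranges straight: the only thing to check is that every letter invoked actually lies inside the occurrence it is read from, and both relevant inequalities ($k + p \le |w|$ for the first occurrence and $i + k + p - 1 \le j$ for the substitution) follow immediately from $k \le |w| - p$. The essential idea—that an overlap of length $|w| - p$ between two copies of $w$ forces self-agreement at distance $p$—is what makes the claim, as the authors remark, immediate.
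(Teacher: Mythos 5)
Your proof is correct and is exactly the routine index computation the paper has in mind when it declares the lemma obvious (the paper gives no proof at all). The period bound $1 \le p \le |w|-1$ and the identification $s[i+k+p-1] = s[i'+k-1]$ are verified cleanly, so there is nothing to add.
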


For a given string $s$, the \emph{LZ} (resp., \emph{novLZ}) \emph{parsing} of $s$ is the decomposition $s = f_1f_2\cdots f_r$ built from left to right by the following greedy procedure: if a prefix $s[1..i{-}1] = f_1f_2\cdots f_{p-1}$ is already processed, then the string $f_p$ (which is called a \emph{phrase}) is either the letter $s[i]$ that does not occur in $s[1..i{-}1]$ or is the longest string that starts at position $i$ and has an occurrence at position $j < i$ (resp., $j \le i - |f_p|$). The \emph{\LZtrip} (resp., \emph{nov\LZtrip}) \emph{parsing} is constructed by an analogous greedy procedure but the phrase $f_p$ is chosen as the longest string occurring at position $i$ such that the string $f_p[1..|f_p|{-}1]$ has an occurrence at position $j < i$ (resp., $j \le i - |f_p| + 1$).

Consider $s = abababc$. The LZ, novLZ, \LZtrip, and nov\LZtrip~parsings of $s$ are, respectively, $a.b.abab.c$, $a.b.ab.ab.c$, $a.b.ababc$, and $a.b.aba.bc$.

Let $s = t_1t_2\cdots t_r$ be a decomposition of $s$ into non-empty strings $t_1, \ldots, t_r$. We say that $t_1t_2\cdots t_r$ is an \emph{LZ-type} (resp., \emph{novLZ-type}) \emph{parsing} if for each $i \in [1..r]$, the string $t_i$ either is a letter or has an occurrence in the string $s[1..|t_1\cdots t_i|{-}1]$ (resp., in $t_1t_2\cdots t_{i-1}$). Analogously, we say that $t_1t_2\cdots t_r$ is an \emph{\LZtrip-type} (resp., \emph{nov\LZtrip-type}) \emph{parsing} if for each $i \in [1..r]$, the string $t_i[1..|t_i|{-}1]$ has an occurrence in the string $s[1..|t_1\cdots t_i|{-}2]$ (resp., in $t_1t_2\cdots t_{i-1}$).

The number of phrases in a parsing is called the \emph{size} of the parsing. We write $\Zov$ (resp., $\Znon$, $\Zov_3$, $\Znon_3$) to denote the size of the LZ (resp., novLZ, \LZtrip, nov\LZtrip) parsing of a given string.

Our main tool in the subsequent analysis is the following well-known optimality lemma (see \cite[Th. 1]{LZ76}). We omit the proof as it is straightforward.
\begin{lemma}
For any given string, the size of its LZ (resp., novLZ, \LZtrip, nov\LZtrip) parsing is less than or equal to the size of any LZ-type (resp., novLZ-type, \LZtrip-type, nov\LZtrip-type) parsing.\label{LZoptimal}
\end{lemma}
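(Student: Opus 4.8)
The plan is to prove all four inequalities by a single \emph{greedy-stays-ahead} argument, treating the four variants uniformly. Fix one of the four types; let $s=f_1f_2\cdots f_r$ be the corresponding greedy parsing and $s=t_1t_2\cdots t_m$ an arbitrary parsing of the same type. Write $\ell_k=|f_1\cdots f_k|$ and $L_k=|t_1\cdots t_k|$ for the positions of the $k$th phrase boundaries, with $\ell_0=L_0=0$. I would establish the key claim that $\ell_k\ge L_k$ for every $k$, by induction on $k$; the lemma then follows because $\ell_m\ge L_m=n=\ell_r$ together with the strict monotonicity of $k\mapsto\ell_k$ (each phrase is nonempty) forces $r\le m$.

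For the induction step assume $\ell_{k-1}\ge L_{k-1}$. If $L_k\le\ell_{k-1}$ the claim is immediate, since $\ell_k>\ell_{k-1}\ge L_k$. The substantial case is $L_k>\ell_{k-1}$, where the arbitrary phrase $t_k$ (which starts at position $L_{k-1}{+}1\le\ell_{k-1}{+}1$) reaches strictly past the greedy boundary $\ell_{k-1}$. Here I would look at the suffix $u=s[\ell_{k-1}{+}1..L_k]$ of $t_k$, which is exactly the string that the greedy phrase $f_k$ must compete with, as $f_k$ also starts at position $i=\ell_{k-1}{+}1$. The goal is to show that $u$ is an admissible candidate for $f_k$, so that maximality of $f_k$ gives $|f_k|\ge|u|$ and hence $\ell_k=\ell_{k-1}+|f_k|\ge L_k$.

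The heart of the argument is to convert the earlier occurrence guaranteed for $t_k$ (or, in the triple variants, for $t_k[1..|t_k|{-}1]$) into an earlier occurrence of $u$ (resp.\ $u[1..|u|{-}1]$) that respects the greedy rule. By the definition of the appropriate parsing type, $t_k$ (resp.\ $t_k[1..|t_k|{-}1]$) has an occurrence starting at some position $j\le L_{k-1}$. Since $u$ (resp.\ $u[1..|u|{-}1]$) is the suffix of $t_k$ (resp.\ of $t_k[1..|t_k|{-}1]$) obtained by dropping its first $\ell_{k-1}-L_{k-1}$ letters, that same occurrence yields an occurrence of $u$ (resp.\ $u[1..|u|{-}1]$) starting at $j'=j+(\ell_{k-1}-L_{k-1})\le\ell_{k-1}<i$. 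This is precisely the condition $j'<i$ needed in the overlapping (LZ / \LZtrip) variants. For the non-overlapping (novLZ / nov\LZtrip) variants, the earlier occurrence of $t_k$ (resp.\ $t_k[1..|t_k|{-}1]$) lies entirely in $s[1..L_{k-1}]$, and the shift computation shows the derived occurrence of $u$ (resp.\ $u[1..|u|{-}1]$) ends at a position $\le L_{k-1}\le\ell_{k-1}=i-1$, i.e.\ it does not overlap position $i$; this is exactly the constraint $j'\le i-|u|$ (resp.\ $j'\le i-|u|+1$) in the greedy rule. In every case $u$ is admissible, so $|f_k|\ge|u|$ and the induction goes through. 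The degenerate subcase where $t_k$ is a fresh letter with no earlier occurrence is trivial: then $L_{k-1}=\ell_{k-1}$, $|u|=1$, and $|f_k|\ge1$ holds automatically.

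I expect the only real care to be in the index bookkeeping of the third paragraph: one must check, for each of the four variants separately, that the shifted occurrence of $u$ not only starts early enough but (in the non-overlapping cases) ends early enough, and that the ``drop the last letter'' bookkeeping of the triple variants is applied to $u$ rather than to $t_k$. These are four short and parallel computations following the same template, so once the template is set up, no variant poses a genuinely new difficulty; the strict monotonicity of $\ell$ and the boundary identity $\ell_r=L_m=n$ then finish the proof identically in all four cases.
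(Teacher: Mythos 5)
Your proof is correct: the paper omits the proof of this lemma entirely (declaring it straightforward and citing Lempel and Ziv), and your greedy-stays-ahead induction, with the suffix $u=s[\ell_{k-1}{+}1..L_k]$ of $t_k$ serving as an admissible candidate for $f_k$ in each of the four variants, is exactly the standard argument being omitted. The case analysis and index bookkeeping (including the observation that the derived occurrence of $u$, resp.\ $u[1..|u|{-}1]$, ends no later than the occurrence of $t_k$, resp.\ $t_k[1..|t_k|{-}1]$, which is what the non-overlapping variants need) all check out.
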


\section{Relations Between Overlapping and Non-overlapping Parsings}\label{SectOverlapVsNonoverlap}

For the proof of Theorem~\ref{MainTheorem}, we need the following technical lemma.

\begin{lemma}
Suppose that $t_1, \ldots, t_r$ is a sequence of positive numbers such that $t_1 + t_2 + \cdots + t_r \le n$ for some $n > 0$; then, for any given $k > 0$, we have $\sum_{i=1}^r \log\frac{t_i}{k} \le r\log\frac{n}{rk}$.\label{TechLemma}
\end{lemma}
\begin{proof}
Denote $\alpha_i = \frac{t_i}k$. Note that $\alpha_1 + \cdots + \alpha_r \le \frac{n}{k}$. A well-known corollary of the concavity of the function $\log$ is that the sum $\sum_{i=1}^r \log\alpha_i$ is maximized whenever all $\alpha_i$ are equal and maximal, i.e., $\alpha_i = \frac{n}{rk}$ for all $i \in [1..r]$. Hence, the result follows.
\end{proof}

\thmmain*

\begin{proof}
Let us consider the case of $\Zov$ and $\Znon$; the proof for $\Zov_3$ and $\Znon_3$ can be reconstructed by analogy.

Since the novLZ parsing of $s$ is an LZ-type parsing, $\Zov \le \Znon$ by Lemma~\ref{LZoptimal}. Hence, it suffices to prove that $\Znon \le \Zov\cdot O(\log\frac{n}{\Zov\log_\sigma\Zov})$. The idea of the proof is to use the LZ parsing $f_1f_2\cdots f_{\Zov}$ of $s$ to construct a novLZ-type parsing of size $\Zov\cdot O(\log\frac{n}{\Zov\log_\sigma\Zov})$; then, the required bound follows from Lemma~\ref{LZoptimal}.

We construct a new parsing for $s$ substituting each phrase $f_i$ with a set of new phrases. If a phrase $f_i$ has an occurrence in the string $f_1\cdots f_{i-1}$, then we do not alter $f_i$ and include it in the new parsing. Consider a phrase $f_i$ such that the leftmost occurrence of $f_i$ in the string $f_1\cdots f_i$ occurs at position $j$ such that $|f_1\cdots f_{i-1}| - |f_i| + 1 < j$ (i.e., this occurrence of $f_i$ overlaps with $f_i$). Let us choose an arbitrary constant $\alpha \in (0,1)$. Denote $k = \alpha\log_{\sigma}\Zov$. We first discuss how to process the case $j \le |f_1\cdots f_{i-1}| - k$ (i.e., when the leftmost occurrence of $f_i$ is farther than $k$ letters from $f_i$).

By Lemma~\ref{OverlapPeriod}, $p = |f_1\cdots f_{i-1}| + 1 - j$ is a period of $f_i$ and $p \in [k..|f_i|]$. We decompose $f_i$ as follows: $f_i = t_1\cdots t_r$, where $|t_1| = 2^0p, |t_2| = 2^1p, \ldots, |t_{r-1}| = 2^{r{-}2}p$, and $t_r$ is a non-empty suffix of $f_i$ of length ${\le}2^{r-1}p$. Since $p$ is a period of $f_i$ and the substring of length $p$ preceding the phrase $f_i$ is equal to $f_i[1..p]$, any string $t_h$ from the decomposition occurs at $2^{h-1}p$ positions to the left and, since $|t_h| = 2^{h-1}p$, this occurrence does not overlap $t_h$. Therefore, we can include the strings $t_1, \ldots, t_r$ from the decomposition $f_i = t_1\cdots t_r$ as phrases in the novLZ-type parsing under construction. It is easy to see that $r = O(\log\frac{|f_i|}{p})$. Since $p \ge k$, we obtain $r = O(\log\frac{|f_i|}{k})$. Hence, it follows from Lemma~\ref{TechLemma} that the number of new phrases introduced by all such decompositions is upper bounded by $O(\sum_{i=1}^{\Zov} \log\frac{|f_i|}{k}) \le \Zov\cdot O(\log\frac{n}{\Zov k}) \le \Zov\cdot O(\log\frac{n}{\Zov\log_\sigma\Zov})$, exactly as required.

Now we process each phrase $f_i$ whose leftmost occurrence is at position $j > |f_1\cdots f_{i-1}| - k$ and overlaps $f_i$. Again, $p = |f_1\cdots f_{i-1}| + 1 - j$ is a period of $f_i$. Denote $c = \lfloor k / p\rfloor$. Note that $\frac{k}2 \le cp \le k$. Suppose that $cp < |f_i|$. We decompose $f_i$ as $f_i = t_0t_1\cdots t_r$, where $|t_0| = cp, |t_1| = 2^0cp, |t_2| = 2^1cp, \ldots, |t_{r-1}| = 2^{r-2}cp$, and $t_r$ is a non-empty suffix of $f_i$ of length ${\le}2^{r-1}cp$. As in the above analysis, it is easy to show that, for each $h \in [1..r]$, the substring $t_h$ from the decomposition has a non-overlapping left occurrence and, therefore, we can include the strings $t_1, \ldots, t_r$ as phrases in the novLZ-type parsing under construction. If $t_0$ also has a non-overlapping left occurrence, we include $t_0$ in the parsing; otherwise, we further decompose $t_0$ into one letter phrases. Since $cp > \frac{k}{2}$, it follows from the same arguments as in the case $j \le |f_1\cdots f_{i-1}| - k$ that the substrings $t_1, \ldots, t_r$ from all such decompositions add at most $\Zov\cdot O(\log\frac{n}{\Zov\log_\sigma\Zov})$ phrases. Let us show that the substrings $t_0$ decomposed into letters add $o(\Zov)$ phrases (the substrings $t_0$ that have non-overlapping left occurrences, obviously, add at most $\Zov$ phrases).

The crucial observation is that the length of each substring $t_0$ is at most $k$ and there are only at most $k\sigma^k$ distinct strings of length at most $k$ in $s$. Therefore, at most $k\sigma^k$ substrings $t_0$ will be decomposed into letters and, thus, they in total add at most $k^2\sigma^k = \Zov^\alpha \alpha^2\log^2_\sigma\Zov$ one letter phrases, which is $o(\Zov)$, i.e., negligible compared to~$\Zov$.

In the remaining case $cp \ge |f_i|$, we simply decompose $f_i$ into $|f_i|$ one letter phrases. Due to the greedy nature of the LZ parsing, all  strings $f_j f_{j+1}[1]$ (a phrase plus the following letter), for $j \in [1..\Zov{-}1]$, are distinct. Hence, using a counting argument analogous to the above one, it can be shown that there are at most $k\sigma^k$ such $f_i$ with $cp \ge |f_i|$ and their decompositions add at most $o(\Zov)$ phrases.
\end{proof}

The lower bound $\Zov$ for $\Znon$ (resp., $\Zov_3$ for $\Znon_3$) is obviously tight since the overlapping and non-overlapping parsings coincide for any string having no overlaps, and such \emph{overlap-free} strings of any length exist for any non-unary alphabet (see~\cite{Thue12}). The following recursively defined family of strings gives another possible construction with $\Zov = \Znon$ and $\Zov_3 = \Znon_3$: $s_1 = a_1$ and $s_i = s_{i-1}s_{i-1}a_i$, for $i > 1$, where $a_i$ are distinct letters; each string $s_i$ has length $2^i - 1$, and its LZ and novLZ (resp., \LZtrip{} and nov\LZtrip) parsings coincide and have size $2i - 1 = 2\log(|s_i| + 1) - 1$ (resp., $i = \log(|s_i| + 1)$). Further, for $a_1a_2\cdots a_n$, we also obviously have $\Zov = \Znon$ and $\Zov_3 = \Znon_3$. Combining these two constructions, one can easily describe, for arbitrary given integers $n > 0$ and $k \in [2\log(n + 1) - 1 .. n]$, a string of length $n$ with $k = \Zov = \Znon$ (resp., $k = \Zov_3 = \Znon_3$).

Theorem~\ref{MainExampleTheorem} proves the tightness of the upper bound $\Zov\cdot O(\frac{n}{\Zov\log_\sigma\Zov})$ for $\Znon$ (and of the respective upper bound $\Zov_3\cdot O(\frac{n}{\Zov_3\log_\sigma\Zov_3})$ for $\Znon_3$).

\thmmainexample*

\begin{proof}
We describe such string only for LZ/novLZ; however, our construction can be used for \LZtrip/nov\LZtrip\ as well and the analysis is analogous, so we omit the details.

The example for an unlimited alphabet is easy (for simplicity, we assume here that $n$ is a multiple of $\sigma$): the string $a_1^{n/\sigma}a_2^{n/\sigma}\cdots a_\sigma^{n/\sigma}$, where $a_1,\ldots,a_\sigma$ are distinct letters, satisfies $\Zov=2\sigma$ and $\Znon=\Omega(\sigma\log\frac{n}{\sigma})=\Omega(\Zov\log\frac{n}{\Zov\log_\sigma\Zov})$.

We generalize this simple example for alphabets of restricted size $\sigma$ replacing each letter $a_i$ with a string of length $\Theta(\log_\sigma z)$. Let us describe the strings that serve as replacements. Denote $d = \lceil\log_\sigma\Zov\rceil$. In~\cite{Cohn} it was shown that all $\sigma^d$ possible strings of length $d$ over an alphabet of size $\sigma$ can be arranged in a sequence $v_1, v_2, \ldots, v_{\sigma^d}$ (called a \emph{\mbox{$\sigma$-ary} Gray code}~\cite{Cohn,Gray}) such that, for any $i \in [2..\sigma^d]$, the strings $v_{i-1}$ and $v_i$ differ in exactly one position. Moreover, we can choose such sequence so that $v_1 = b^d$, where $b$ is an arbitrarily chosen letter from the alphabet. The strings $u_i = ab^{d-1}v_i$, where $a$ is a letter that differs from $b$, serve as the replacements for $a_i$. The important property of $u_i$ is that no two distinct strings $u_i$ and $u_j$ are conjugates; this follows from the observation that conjugates must contain two occurrences of $ab^{d-1}$, while the only string $u_i$ with this property is $(ab^{d-1})^2$.

Suppose that $\Zov \le 8$. Since $\Omega(\Zov\log\frac{n}{\Zov\log_\sigma\Zov}) = \Omega(\log n)$ in this case (note that $\sigma \le \Zov \le 8$), the statement of the theorem can be easily proved using the example string $a^n$.  Now suppose that $\Zov > 8$. Denote $k =  \lfloor \Zov/8 \rfloor$.  Observe that $k \ge 1$ and $\sigma^d = \sigma^{\lceil\log_\sigma z\rceil} \ge z > k$. Our example is the following string:
$$
s = u_1^{\lfloor\frac{n}{2kd}\rfloor} u_2^{\lfloor\frac{n}{2kd}\rfloor} \cdots u_k^{\lfloor\frac{n}{2kd}\rfloor},
$$
which consists of $k$ ``blocks'' $u_i^{\lfloor\frac{n}{2kd}\rfloor}$. Since $|u_i| = 2d$, the length of $s$ is $\lfloor\frac{n}{2kd}\rfloor 2kd \le n$. We append enough letters $a$ to the end of $s$ to make the length equal to $n$; such modification does not affect the proof that follows, so, without loss of generality, we assume that $|s| = n$.

Since $\Zov \le \frac{n}{\log_\sigma n}$ and $k \le \Zov / 8$, we have $kd \le (\Zov / 8) \lceil\log_\sigma\Zov\rceil \le \frac{n}{8\log_\sigma n} \lceil\log_\sigma n\rceil \le n / 4$. Therefore, $\lfloor\frac{n}{2kd}\rfloor \ge 2$, i.e., each block $u_i^{\lfloor\frac{n}{2kd}\rfloor}$ consists of at least two copies of $u_i$.

The string $s$ has an LZ-type parsing with at most $4$ phrases per block: $a.b.b^{2d-2}.u_1^{\lfloor\frac{n}{2kd}\rfloor-1}.$ $.u'_2.c_2.u''_2.u_2^{\lfloor\frac{n}{2kd}\rfloor-1}.\cdots.u'_k.c_k.u''_k.u_k^{\lfloor\frac{n}{2kd}\rfloor-1}$, where $u'_i$ (resp., $u''_i$) is the longest common prefix (resp., suffix) of $u_i$ and $u_{i-1}$, and $c_i$ is a letter. This shows that the size of the LZ parsing of $s$ is at most $4k \le \frac{\Zov}{2}$. On the other hand, let us demonstrate that the size of the non-overlapping LZ parsing of $s$ is at least $k\log\lfloor\frac{n}{4kd}\rfloor$.

Consider, for $i > 1$, the leftmost occurrence of $u_i$ in $s$. It is inside $u_j^2$ or $u_{j-1}u_j$ for some $j\le i$. In the first case, the occurrence is a conjugate of $u_j$, implying $j=i$. In the second case, it is a conjugate of either $u_{j-1}$ or $u_j$ (since $u_{j-1}$ and $u_j$ differ in exactly one position and have the same length $2d$); so again $j=i$. This means that the first $u_i$ in the $i$th block cannot have non-overlapping left occurrences. Hence, this $u_i$ contains at least one border between phrases of the novLZ parsing. Moreover, such a phrase containing the suffix of this $u_i$ has length at most $4d$ since if it has length greater than $4d$, then the second string $u_i$ in the $i$th block has a copy at a distance of more than $4d$ symbols to the left and, thus, this copy is inside the first $i - 1$ blocks, which is impossible. Analogously, one can show that the next phrase has length at most $8d$, then $16d$, and so on until the phrase border inside or immediately before the first occurrence of $u_{i+1}$. Thus, we have proved that at least $\log\lfloor\frac{n}{4kd}\rfloor$ phrases are needed for each of the $k$ blocks, as required. Therefore, since $\frac{\Zov}{8} - 1 < k \le \frac{\Zov}{8}$, we obtain $\Znon = \Omega(\Zov\log\frac{n}{\Zov \log_\sigma\Zov})$, i.e., the upper bound for $\Znon$ is reached on the string $s$.
\end{proof}

\section{Relations Between Parsings with Pairs and Triples}\label{SectLZvsLZ3}

Now we prove Theorem~\ref{PairsVsTriplesTheorem}.

\thmpairsvstriples*

\begin{proof}
Let us consider $\Zov$ and $\Zov_3$; the analysis of $\Znon$ and $\Znon_3$ is the same.

Let $f_1f_2\cdots f_{\Zov_3}$ be the \LZtrip~parsing of a string $s$. It is immediate from the definitions that $f_1t_2t'_2\cdots t_{\Zov_3}t'_{\Zov_3}$, where $t_i = f_i[1..|f_i|{-}1]$ and $t'_i = f_i[|f_i|]$, is an LZ-type parsing of $s$ of size at most $2\Zov_3-1$ (we remove empty strings $t_i$ from the parsing). Hence $\Zov < 2\Zov_3$ by Lemma~\ref{LZoptimal}. Further, the LZ parsing of $s$ is an \LZtrip-type parsing of $s$ by definition. Therefore, again by Lemma~\ref{LZoptimal}, we obtain $\Zov_3 \le \Zov$.

Let us show the tightness of the bounds. The verification of examples presented below might be tedious but, nevertheless, is quite straightforward, so we do not discuss all details. However, to ensure that the constructions are correct, we wrote a computer program checking the examples for small parameters $k$.

Let $k\ge 2$ (the case $k=1$ is trivial). The restriction $\Zov=2k-1$, $\Zov_3=k$ is satisfied by the string
$$
aabaab^3aab^7\cdots aab^{2^{k-2}-1}aab^{2^{k-2}},
$$
whose  LZ and \LZtrip~parsings are, respectively,
$$
\begin{array}{l}
a.a.b.aab.b^2.aab^3.b^4.\cdots.aab^{2^{k-2}-1}.b,\\
a.ab.aabb.baab^4.b^3aab^8.\cdots .b^{2^{k-3}-1}aab^{2^{k-2}}.
\end{array}
$$
Next, the equalities $\Zov=\Zov_3=2k$ hold for the string
$$
abab^4abab^{10}\cdots abab^{3\cdot 2^{k-1}-2},
$$
having the following  LZ and \LZtrip~parsings:
$$
\begin{array}{l}
a.b.ab.b^3.abab^4.b^6.\cdots .abab^{3\cdot 2^{k-2}-2}.b^{3\cdot 2^{k-2}},\\
a.b.ab^2.b^2a.bab^5.b^5a.\cdots. bab^{3\cdot 2^{k-2}-1}.b^{3\cdot 2^{k-2}-1}.
\end{array}
$$
Note that if we delete the last $3\cdot 2^{k-2}$ $b$'s, both parsings of the resulting string will have size $2k-1$.  Therefore, the equality $\Zov=\Zov_3=k$ can be achieved for any $k$. Further, the string
$$
a^2ba^5b^3a^{11}b^7\cdots a^{3\cdot 2^{k-1}-1}b^{2^k-1}a^{3\cdot 2^k-1}b^{2^k}
$$
satisfies $\Znon=4k+3$, $\Znon_3=2k+2$ as its corresponding novLZ and nov\LZtrip~parsings look as follows:
$$
\begin{array}{l}
a.a.b.a^2\!.a^2\!.ab.b.ba^5\!.a^5\!.ab^3\!.b^3\!.ba^{11}\!.a^{11}\!.ab^7.\cdots.\\
\hfill ba^{3\cdot 2^{k-1}-1}\!.a^{3\cdot 2^{k-1}-1}\!.ab^{2^k-1}\!.b,\\
a.ab.a^3\!.a^2b^2\!.ba^6\!.a^5b^4\!.\cdots .b^{2^{k-1}-1}a^{3\cdot 2^{k-1}}\!\!.a^{3\cdot 2^{k-1}-1}b^{2^k}\!.
\end{array}
$$
If we delete the last phrase of the nov\LZtrip~parsing, the resulting string will satisfy $\Znon=4k+1$, $\Znon_3=2k+1$. Therefore, the condition $\Znon=2k-1$, $\Znon=k$ can be satisfied for any $k$. Finally, it is easy to verify that one has $\Znon_3 = \Znon =k$ for the string $(ab)^{2^{k-2}}$. The theorem is proved.
\end{proof}

\section{Concluding Remarks}\label{SectConclusion}

In the literature there is still a lack of information concerning the relations between different measures of compressibility for highly repetitive texts. In this paper we investigated the relations between the most popular versions of LZ77 but, besides LZ77, there are other popular measures. For instance, it is a major open problem to find tight relations between an LZ77 parsing of a given string and the number of runs in its Burrows--Wheeler transform (see~\cite{GagiePrezzaNavarro}). Further, it is known that the size of the smallest grammar of any given string of length $n$ is within $O(\log n)$ factor of the size of the LZ parsing and it is known that this bound is tight to within a factor $O(\log\log n)$ (see~\cite{BilleGagieGortzPrezza, CharikarEtAl, Jez2, Rytter03}); but it is still open whether this bound can be improved to $O(\frac{\log n}{\log\log n})$. The things are not always clear even in the realm of LZ77-alike parsings: for example, it is still not known whether, as it was conjectured in~\cite{KreftNavarroTCS}, the so-called LZ-End parsing contains at most $2\Znon$ phrases. Finally, note a rather unexpected connection between $\Znon$ and the number of distinct factors in the Lyndon decomposition of a string \cite{Karkkainen_etalSTACS2017}.

We refer the reader to~\cite{GagiePrezzaNavarro} and~\cite{KempaPrezza} and references therein for further discussion on different measures of compressibility and their relations; other compression schemes and results on their relations can also be found in~\cite{StorerSzymanski}.

\subparagraph{Acknowledgement.}
The authors would like to thank the anonymous referee for detailed comments that helped to improve the paper.


\bibliography{lz-cmp}
\bibliographystyle{elsarticle-num-names-sort}
\end{document}